\documentclass[
  aps,
  prl,
  reprint,
  superscriptaddress,
  longbibliography,
  floatfix
]{revtex4-2}

\usepackage{amsmath,amssymb,amsthm}
\usepackage{mathtools}
\usepackage{tikz}
\usetikzlibrary{angles,arrows.meta,calc,quotes}
\usepackage{bm}
\usepackage[english]{babel}
\usepackage{hyperref}

\newcommand{\Id}{\mathbb{I}}
\DeclareMathOperator{\Tr}{Tr}
\newcommand{\dd}{\mathop{}\!\mathrm{d}}
\newcommand{\E}{\mathbb{E}}

\newtheorem{theorem}{Theorem}

\hypersetup{
  colorlinks = true,
  linkcolor = blue,
  citecolor = blue,
  urlcolor = blue
}

\begin{document}

\title{Bell tests for collider decay processes}

\author{Danilo M. Fucci}
\email[Contact author: ]{danilo.fucci@philosophy.ox.ac.uk}
\affiliation{Faculty of Philosophy, University of Oxford, Oxford OX2 6GG, United Kingdom}

\author{Alexandre C. Orthey Jr.}
\affiliation{Instituto de Física Teórica, Universidade Estadual Paulista, Rua Dr.\ Bento Teobaldo Ferraz, 271, Bloco II, CEP 01140-070 São Paulo, São Paulo, Brazil}
\affiliation{Faculty of Mathematics, Informatics and Mechanics, University of Warsaw, ulica Banacha 2, 02-097 Warsaw, Poland}

\author{Alan J. Barr}
\affiliation{Department of Physics, University of Oxford, Oxford OX1 3PU, United Kingdom}
\affiliation{Merton College, Oxford OX1 4JD, United Kingdom}
\affiliation{Département de Physique Nucl\'eaire et Corpusculaire Universit\'e de Gen\`eve, CH-1211 Gen\`eve, Switzerland}

\author{Christopher G. Timpson}
\affiliation{Faculty of Philosophy, University of Oxford, Oxford OX2 6GG, United Kingdom}

\date{\today}

\begin{abstract}
\noindent We propose Bell tests for collider decays using independently sampled analysis axes as settings and detector regions to assign binary or null outcomes. We show that a Bell violation requires postselection, opening a loophole. However, under measurement independence and operational equivalence of acceptance events across settings, any local model exploiting the loophole must be nonclassical in the sense of measurement contextuality. We derive a Bell inequality violated by singlet correlations when the effective spin-analyzing powers are sufficiently high. Its violation then witnesses Bell nonlocality or measurement contextuality.
\end{abstract}

\maketitle

Bell's theorem showed that assumptions about local hidden variables can be subjected to experimental scrutiny in a framework largely independent of the underlying physical theory~\cite{Bell1964,Brunner2014}. While Bell tests have reached a high level of maturity in atomic, optical, and solid-state platforms~\cite{Rosenfeld2017,Giustina2015,Shalm2015,Hensen2015}, it remains an open problem to formulate an operational Bell test in high-energy collider processes~\cite{Fabbrichesi2021} and to address the associated loopholes~\cite{Barr2024,Fabbrichesi:2025aqp}. The difficulty is that a Bell test is a model-independent input-output experiment in which the measurement settings are external variables. Collider detectors, by contrast, primarily record decay data rather than implement freely chosen measurement settings, with the resulting events interpreted through model-dependent reconstruction procedures~\cite{Abel1992,Timpson2023,Li2024,Low2025}. Recent analyses of entanglement in such systems~\cite{ATLAS2024,CMS2024,Bechtle2025} make it timely to ask whether a genuine collider Bell test can be formulated. This work proposes one such construction.

Measurement-independence and model-independence are both crucial for a Bell test, and in collider processes they become intertwined. For local hidden-variables $\lambda$, let $x$ and $y$ denote each party's measurement settings with corresponding outcomes $a$ and $b$. Measurement independence, also called $\lambda$-independence, is given by $p(x,y|\lambda)=p(x,y)$, or equivalently $p(\lambda|x,y)=p(\lambda)$~\cite{Bell2004,Hall2010}. Together with local causality, it leads to the decomposition
\begin{equation}
  \label{eq:bell-lhv-decomp}
  p(a,b|x,y)=\int \dd\lambda\,\rho(\lambda)p(a|x,\lambda)p(b|y,\lambda).
\end{equation}
In standard Bell test scenarios, this assumption is usually justified by the use of macroscopic measurement devices whose settings can be chosen independently and, ideally, in causal separation from the state-preparation procedure. In collider contexts, by contrast, particle decay is often regarded as a self-measurement process~\cite{Tornqvist:1980af,Hiesmayr:2014jva,Barr:2025avs}. In that case, there is no experimental control over the putative measurement settings: both settings and outcomes are extracted from the same microscopic process whose correlations one wants to test, thereby challenging the $\lambda$-independence assumption. Concurrently, without externally assigned inputs, the observed decay data do not directly define conditional frequencies $p(a,b|x,y)$. They must instead be organized into such a structure through a model of the production and decay process, making model independence difficult to maintain.

Recent studies of Bell nonlocality illustrate this issue. They proceed indirectly using a quantum-field-theoretic model to reconstruct a spin density operator from the measured momenta of the decay products; Bell nonlocality witnesses are then evaluated on the inferred state~\cite{Afik2021,Fabbrichesi2021,Severi2022,Barr2024}. At most, such protocols show that the reconstructed state would violate a Bell inequality if suitable measurements could be performed; by themselves, however, they do not provide the observed conditional probabilities of an actual Bell test.

Here we formulate a collider Bell test architecture that restores the appropriate input-output structure. We introduce measurement settings as independently selected analysis axes $\hat x$ and $\hat y$, which specify, on each wing, pairs of detector regions assigned to the outcomes $+1$ and $-1$. Detections outside the selected regions are assigned the no-click outcome $\varnothing$. Operationally, each setting selects coarse-grained detector regions on each wing. These regions may be calibrated in advance using the known relation between decay directions and spin, but the Bell test data are the resulting detector-level outcomes. We first prove a no-go result: if the full record of where the decay products hit the detector is retained, then the settings only define classical post-processings of the same underlying local POVMs. The resulting statistics therefore admit a local hidden-variable model and cannot violate a Bell inequality, in agreement with Refs.~\cite{Kasday1971,Abel1992}.

Our proposed construction circumvents this no-go result by choosing narrow coarse-graining regions and retaining only runs in which neither wing returns the outcome $\varnothing$, an event we call joint acceptance. This necessarily opens a postselection loophole~\cite{Pearle1970,Larsson2014}, which we analyze using measurement noncontextuality: the requirement that operationally equivalent measurement events---those having identical probabilities for every preparation---be represented identically at the hidden-variable level~\cite{Spekkens2005}. When joint acceptance is operationally equivalent for every pair of settings, this requirement prevents postselection from correlating the hidden variables with the chosen settings. Consequently, any local account in which postselection produces the setting dependence required for an apparent Bell violation must be nonclassical by representing joint acceptance contextually. We refer to this as joint-acceptance contextuality. The requisite operational equivalences are suggested by a spin-analyzer model and can, in principle, be tested experimentally in a theory-neutral way~\cite{Mazurek2021}.

Given the geometry of the problem, we then derive a full-correlation, continuous-input, binary-output Bell inequality. Its violation, together with measurement independence and the operational equivalences just described, rules out every hidden-variable model that is both Bell local and noncontextual with respect to joint acceptance. It therefore witnesses Bell nonlocality or joint-acceptance contextuality. We derive the quantum violation threshold in terms of effective spin-analyzing power and coarse-graining width and, in the End Matter, outline a corresponding experimental protocol.

\begin{figure}[t]
  \centering
  \begin{tikzpicture}[scale = 0.8, >=Latex, line cap = round, line join = round]
    \def\sphereRadius{3}
    \def\capAngle{15}
    \def\axisPolarAngle{130}
    \def\axisAzimuth{45}
    \def\meridianRadius{0.78}

    \pgfmathsetmacro{\capRadius}{\sphereRadius*sin(\capAngle)}
    \pgfmathsetmacro{\capYRadius}{\capRadius*abs(cos(\axisPolarAngle))}

    \coordinate (O) at (0,0);
    \coordinate (Xp) at (
      {\sphereRadius*cos(\capAngle)*sin(\axisPolarAngle)*cos(\axisAzimuth)},
      {\sphereRadius*cos(\capAngle)*sin(\axisPolarAngle)*sin(\axisAzimuth)}
    );
    \coordinate (Xm) at (
      {\sphereRadius*cos(\capAngle)*sin(180-\axisPolarAngle)*cos(\axisAzimuth+180)},
      {\sphereRadius*cos(\capAngle)*sin(180-\axisPolarAngle)*sin(\axisAzimuth+180)}
    );

    \shade[ball color = gray!20, opacity = 0.9] (O) circle (\sphereRadius);
    \fill[gray!5, opacity = 0.8]
      (0,\sphereRadius)
      arc[start angle = 90, end angle = -90, radius = \sphereRadius]
      arc[
        start angle = -90,
        end angle = -270,
        x radius = \meridianRadius,
        y radius = \sphereRadius
      ]
      -- cycle;
    \draw[thick] (O) circle (\sphereRadius);
    \draw[dashed, gray!60, line width = 0.8pt]
      (0,-\sphereRadius)
      arc[
        start angle = -90,
        end angle = 90,
        x radius = \meridianRadius,
        y radius = \sphereRadius
      ];
    \draw[black!85, line width = 1.1pt]
      (0,\sphereRadius)
      arc[
        start angle = 90,
        end angle = 270,
        x radius = \meridianRadius,
        y radius = \sphereRadius
      ];

    \draw[dashed, black] (Xm) -- (Xp);
    \draw[->, thick] (O) -- (Xp);
    \node at (0.5,0.2) {$\hat x$};

    \begin{scope}[shift = (Xp), rotate = \axisAzimuth+90]
      \fill[black, opacity = 0.10]
        (0,0) ellipse [x radius = \capRadius, y radius = \capYRadius];
      \draw[thick]
        (0,0) ellipse [x radius = \capRadius, y radius = \capYRadius];
      \coordinate (rp) at (\capRadius,0);
    \end{scope}

    \begin{scope}[shift = (Xm), rotate = \axisAzimuth+270]
      \fill[black, opacity = 0.10]
        (0,0) ellipse [x radius = \capRadius, y radius = \capYRadius];
      \draw[thick]
        (0,0) ellipse [x radius = \capRadius, y radius = \capYRadius];
    \end{scope}

    \draw[->, thick] (O) -- (rp);
    \node at (-0.1,0.6) {$\hat r_{\mathcal A}$};
    \pic[
      draw,
      angle radius = 10mm,
      angle eccentricity = 1.25,
      "$\varphi$"
    ] {angle = Xp--O--rp};

    \node at ($(Xp)+(0.25,-0.3)$) {$+1$};
    \node at ($(Xm)+(-0.3,0.3)$) {$-1$};
  \end{tikzpicture}
  \caption{Geometric coarse graining defined by the analysis axis $\hat x$. The shaded antipodal caps of angular radius $\varphi$ give outcomes $+1$ and $-1$; directions outside them give the geometric no-click outcome $\varnothing$. The vector $\hat r_{\mathcal A}$ is drawn on the boundary of the $+1$ cap, so that $\varphi$ is the angle between $\hat r_{\mathcal A}$ and $\hat x$. The cutaway is only illustrative.}
  \label{fig:caps3d}
\end{figure}

\textit{Model setup.} We consider a source producing a pair of unstable qubits, which we call the parent particles and whose joint state is denoted by $\rho_{\mathcal{AB}}$. Each parent decays into daughter particles, whose detected momenta carry information about the parent spin and therefore act as spin analyzers. We use the standard Bell terminology of Alice and Bob, or of two wings, to refer to the two parent particles and their corresponding decay branches.

On Alice's wing, the relevant decay kinematics are represented by a unit vector $\hat r_{\mathcal A}\in S^2$, corresponding to the daughter momentum direction in the rest frame of the parent particle. The decay induces on the parent qubit an angular POVM density
\begin{equation}
  K_{\mathcal A}(\hat r_{\mathcal A})
  =
  \frac{1}{4\pi}
  \left(
    \Id+\eta_{\mathcal A}\,\hat r_{\mathcal A}\cdot\bm{\sigma}
  \right),
  \quad
  \int_{S^2} \dd\Omega_{\hat r_{\mathcal A}}\,K_{\mathcal A}(\hat r_{\mathcal A})=\Id.
  \label{eq:spin-analyzer-povm}
\end{equation}
Here $\bm{\sigma}=(\sigma_x,\sigma_y,\sigma_z)$ denotes the vector of Pauli matrices acting on the parent qubit, and $\eta_{\mathcal A}\in[0,1]$ is the spin-analyzing power. Operationally, $K_{\mathcal A}(\hat r_{\mathcal A})\,\dd\Omega_{\hat r_{\mathcal A}}$ is the parent-sector effect associated with observing the daughter momentum in the solid-angle element $\dd\Omega_{\hat r_{\mathcal A}}$ around $\hat r_{\mathcal A}$. The term $\hat r_{\mathcal A}\cdot\bm{\sigma}$ expresses that the daughter direction acts as a measurement pointer variable for the parent spin along $\hat r_{\mathcal A}$, with efficiency set by $\eta_{\mathcal A}$. Bob's wing is described analogously.

In the idealized spin-analyzer description, the detector regions selected by a setting are represented as regions on the sphere of daughter directions. For Alice, the setting is an analysis axis $\hat x$ chosen independently of the decay process; Bob's setting is denoted by $\hat y$. Each axis specifies two antipodal accepted regions, associated with the outputs $+1$ and $-1$, while the complement corresponds to the no-click output $\varnothing$. On each run, the axes are selected independently of the decay process before the observed directions are assigned to outcomes.

For Alice's setting $\hat x$, the acceptance regions are two spherical caps centered around $\hat x$ and $-\hat x$, with angular radius $\varphi$; see Fig.~\ref{fig:caps3d}. We take $0<\varphi\leq\pi/2$, so that the two caps are disjoint up to their common boundary. We define the two caps as
\begin{align}
  C_\varphi(+\hat x)
  &\coloneqq
  \{\hat r_{\mathcal A}:\hat r_{\mathcal A}\cdot\hat x\geq \cos\varphi\}, \\
  C_\varphi(-\hat x)
  &\coloneqq
  \{\hat r_{\mathcal A}:\hat r_{\mathcal A}\cdot\hat x\leq -\cos\varphi\}.
\end{align}
In the figure, $\hat r_{\mathcal A}$ is shown at the boundary of the $+1$ cap, so that $\varphi$ is the angle between the observed daughter direction and the analysis axis. The three local outcomes are
\begin{equation}
  a=
  \begin{cases}
    \pm 1, & \hat r_{\mathcal A}\in C_\varphi(\pm\hat x), \\
    \varnothing, & \text{otherwise.}
  \end{cases}
  \label{eq:local-outcomes}
\end{equation}
The corresponding effects~\cite{KrausEffects} are
\begin{align}
  E_{\pm|\hat x}^{(\varphi)}
  &=
  \int_{C_\varphi(\pm\hat x)} \dd\Omega_{\hat r_{\mathcal A}}\,
  K_{\mathcal A}(\hat r_{\mathcal A}) \\
  &=
  \frac{1-\cos\varphi}{2}\,\Id
  \pm
  \frac{\eta_{\mathcal A}(1-\cos^2\varphi)}{4}\,
  \hat x\cdot\bm{\sigma}.
  \label{eq:Epm3D}
\end{align}
Together with the no-click effect
\begin{equation}
  E_{\varnothing|\hat x}^{(\varphi)}
  =
  \Id - E_{+|\hat x}^{(\varphi)} - E_{-|\hat x}^{(\varphi)}
  =
  \cos\varphi\,\Id,
\end{equation}
these effects define a proper three-outcome POVM\@. We denote by $D_{\mathcal A}$ the local acceptance event, namely the event in which the daughter momentum falls inside one of the two accepted caps. The corresponding acceptance effect is
\begin{equation}
  E_{D_{\mathcal A}|\hat x}^{(\varphi)}
  \coloneqq
  E_{+|\hat x}^{(\varphi)}
  +
  E_{-|\hat x}^{(\varphi)}
  =
  q_{\mathcal A}\,\Id,
  \label{eq:acceptance-effect}
\end{equation}
where $q_{\mathcal A}\coloneqq 1-\cos\varphi$ gives the local acceptance probability. Thus, within the model, this probability is independent of the source state and of the chosen analysis axis. A similar construction holds for Bob. We allow, in principle, different cap radii on the two wings, but take $\varphi=\varphi_{\mathcal A}=\varphi_{\mathcal B}$ throughout for simplicity.

\textit{No-go theorem.} Here we formalize the fact that, without postselection, data comprising detections over the entire detector cannot violate a Bell inequality in a Bell test. The theorem below is stated in full generality, and then specialized to our model.

Let $\Xi_{\mathcal A}$ and $\Xi_{\mathcal B}$ denote the local outcome spaces, with outcomes $\xi_{\mathcal A}\in\Xi_{\mathcal A}$ and $\xi_{\mathcal B}\in\Xi_{\mathcal B}$. We write their joint probability measure as $\nu(\dd\xi_{\mathcal A},\dd\xi_{\mathcal B})$, where $\dd\xi_{\mathcal A}$ and $\dd\xi_{\mathcal B}$ denote the corresponding measure elements. A classical post-processing maps these outcomes, possibly stochastically, to Bell outputs. It is local and setting-dependent if Alice's map uses only $(\xi_{\mathcal A},x)$ and Bob's only $(\xi_{\mathcal B},y)$, and complete if each map assigns total probability one to the Bell outputs for every underlying outcome.

\begin{theorem}
  \label{thm:no-go-postprocessing}
  If the joint measure $\nu$ is independent of the settings $x$ and $y$, then any correlations obtained by a complete local, setting-dependent classical post-processing admit a local hidden-variable model and hence are Bell local.
\end{theorem}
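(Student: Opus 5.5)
The plan is to take the underlying outcome pair itself as the hidden variable, $\lambda\coloneqq(\xi_{\mathcal A},\xi_{\mathcal B})$, distributed according to $\rho(\dd\lambda)\coloneqq\nu(\dd\xi_{\mathcal A},\dd\xi_{\mathcal B})$. Because $\nu$ is assumed independent of $x$ and $y$, this distribution does not depend on the settings. Measurement independence $p(\lambda|x,y)=p(\lambda)$ therefore holds by construction, with no further physical input.

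Next I will write the post-processings as Markov kernels. Let Alice's map be a kernel $P_{\mathcal A}(a|\xi_{\mathcal A},x)$ and Bob's a kernel $P_{\mathcal B}(b|\xi_{\mathcal B},y)$. Each is measurable in the underlying outcome. Locality of the post-processing means that Alice's kernel depends only on $(\xi_{\mathcal A},x)$ and Bob's only on $(\xi_{\mathcal B},y)$. Locality also means the two stochastic maps use independent local randomness, so the joint kernel factorizes as $P_{\mathcal A}(a|\xi_{\mathcal A},x)\,P_{\mathcal B}(b|\xi_{\mathcal B},y)$. If shared randomness between the maps were allowed, I would absorb it into $\lambda$ by enlarging it to $(\xi_{\mathcal A},\xi_{\mathcal B},\mu)$, where $\mu$ is drawn independently of the settings. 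Completeness means $\sum_a P_{\mathcal A}(a|\xi_{\mathcal A},x)=1$ for every $\xi_{\mathcal A}$, and likewise for Bob. Consequently each kernel is a genuine normalized local response function $p(a|x,\lambda)\coloneqq P_{\mathcal A}(a|\xi_{\mathcal A},x)$, with no outcome left unassigned.

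Then I will compute the observed statistics by the law of total probability:
\begin{equation*}
  p(a,b|x,y)=\int \nu(\dd\xi_{\mathcal A},\dd\xi_{\mathcal B})\,P_{\mathcal A}(a|\xi_{\mathcal A},x)\,P_{\mathcal B}(b|\xi_{\mathcal B},y).
\end{equation*}
This has exactly the form of Eq.~\eqref{eq:bell-lhv-decomp}, so the correlations are Bell local and satisfy every Bell inequality.

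The main obstacle is not technical. It is making the role of completeness explicit. If a map could discard runs, meaning $\sum_a P_{\mathcal A}<1$, then conditioning on acceptance would renormalize by a factor depending on $x$, $y$ and $\lambda$. That renormalization would break the product form and reintroduce setting dependence of the effective hidden-variable distribution, which is the postselection loophole discussed later. I will therefore note that for complete maps no conditioning occurs and the decomposition above is exact.

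For the specialization to the model, I will take $\Xi_{\mathcal A}=\Xi_{\mathcal B}=S^2$. The measure is then $\nu(\dd\hat r_{\mathcal A},\dd\hat r_{\mathcal B})=\Tr[\rho_{\mathcal{AB}}K_{\mathcal A}(\hat r_{\mathcal A})\otimes K_{\mathcal B}(\hat r_{\mathcal B})]\,\dd\Omega_{\hat r_{\mathcal A}}\dd\Omega_{\hat r_{\mathcal B}}$, which is manifestly independent of the settings. The cap assignment Eq.~\eqref{eq:local-outcomes}, with $\varnothing$ kept as a legitimate output, is then a deterministic, complete, local post-processing.
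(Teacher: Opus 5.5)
Your proposal is correct and follows essentially the same route as the paper: you take $\lambda=(\xi_{\mathcal A},\xi_{\mathcal B})$ distributed according to $\nu$, treat the complete local post-processing weights as normalized local response functions, and read off the Bell-local decomposition of Eq.~\eqref{eq:bell-lhv-decomp}. Your added remarks are sound and go beyond what the paper states explicitly: absorbing any setting-independent shared randomness into $\lambda$, and explaining why completeness is essential to avoid the postselection renormalization.
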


\begin{proof}
  No factorization of $\nu$ is assumed, so the underlying outcomes may be arbitrarily correlated. Represent Alice's post-processing by weights $w^{\mathcal A}_{a|x}(\xi_{\mathcal A})$, giving the probability of recording $a$ conditional on $x$ and $\xi_{\mathcal A}$, with
  \begin{equation}
    w^{\mathcal A}_{a|x}(\xi_{\mathcal A})\geq0,
    \qquad
    \sum_a w^{\mathcal A}_{a|x}(\xi_{\mathcal A})=1,
  \end{equation}
  and analogously for Bob. The locality of the post-processing then gives
  \begin{equation}
    p(a,b|x,y)=
    \int_{\Xi_{\mathcal A}\times\Xi_{\mathcal B}}
    \nu(\dd\xi_{\mathcal A},\dd\xi_{\mathcal B})\,
    w^{\mathcal A}_{a|x}(\xi_{\mathcal A})\,
    w^{\mathcal B}_{b|y}(\xi_{\mathcal B}).
    \label{eq:coarse-grained-probs}
  \end{equation}
  This is already a local hidden-variable decomposition. Indeed, take $\lambda=(\xi_{\mathcal A},\xi_{\mathcal B})$, with probability measure
  $\rho(\dd\lambda)\coloneqq\nu(\dd\xi_{\mathcal A},\dd\xi_{\mathcal B})$, and define
  $p(a|x,\lambda)=w^{\mathcal A}_{a|x}(\xi_{\mathcal A})$ and
  $p(b|y,\lambda)=w^{\mathcal B}_{b|y}(\xi_{\mathcal B})$.
  Equation~\eqref{eq:coarse-grained-probs} then has the Bell-local form of Eq.~\eqref{eq:bell-lhv-decomp}, and hence all Bell inequalities are satisfied.
\end{proof}

In the present model, $(\xi_{\mathcal A},\xi_{\mathcal B})=(\hat r_{\mathcal A},\hat r_{\mathcal B})$, and the POVMs of Eq.~\eqref{eq:spin-analyzer-povm} give the joint angular probability density
\begin{equation}
  p(\hat r_{\mathcal A},\hat r_{\mathcal B})
  =
  \Tr\!\left[
    \rho_{\mathcal{AB}}
    K_{\mathcal A}(\hat r_{\mathcal A})\otimes K_{\mathcal B}(\hat r_{\mathcal B})
  \right].
  \label{eq:full-angular-distribution}
\end{equation}
The post-processing weights $w^{\mathcal A}_{a|\hat x}(\hat r_{\mathcal A})$ are determined by the detector regions selected by Alice's setting, and analogously for Bob. Thus, as long as all detector outcomes are retained, we obtain the local-hidden variable decomposition
\begin{equation}
  p(a,b|\hat x,\hat y)
  =
  \int \dd\Omega_{\hat r_{\mathcal A}}\,\dd\Omega_{\hat r_{\mathcal B}}\,
  p(\hat r_{\mathcal A},\hat r_{\mathcal B})\,
  w^{\mathcal A}_{a|\hat x}(\hat r_{\mathcal A})\,
  w^{\mathcal B}_{b|\hat y}(\hat r_{\mathcal B}).
  \label{eq:model-coarse-grained-probs}
\end{equation}

\textit{Joint-acceptance contextuality.} Because complete local post-processing of the full detector record cannot yield a Bell violation, postselection becomes necessary. This, however, opens a detection or postselection loophole. Whether the postselected data can still define a sample for certifying Bell nonlocality depends on measurement contextuality assumptions.

Let $D\coloneqq D_{\mathcal A}\cap D_{\mathcal B}$ denote the joint-acceptance event. Within the spin-analyzer model, its effect is
\begin{equation}
  E_{D|\hat x,\hat y}
  =
  E_{D_{\mathcal A}|\hat x}\otimes E_{D_{\mathcal B}|\hat y}
  =
  q_{\mathcal A}q_{\mathcal B}\,\Id_{\mathcal{AB}},
  \label{eq:joint-acceptance-effect}
\end{equation}
independently of the setting pair. For a Bell-local hidden-variable model satisfying measurement independence before postselection, the accepted distribution can be written as
\begin{align}
  & p(a,b|\hat x,\hat y,D) \nonumber\\
  & \quad =
  \int \dd\lambda\,
  \rho(\lambda|\hat x,\hat y,D)\,
  p(a|\hat x,\lambda,D_{\mathcal A})\,
  p(b|\hat y,\lambda,D_{\mathcal B}),
\end{align}
where
\begin{align}
  \rho(\lambda|\hat x,\hat y,D)
  =
  \frac{
    \rho(\lambda)\,
    p(D_{\mathcal A}|\hat x,\lambda)\,
    p(D_{\mathcal B}|\hat y,\lambda)
  }{
    \int \dd\lambda'\,\rho(\lambda')\,
    p(D_{\mathcal A}|\hat x,\lambda')\,
    p(D_{\mathcal B}|\hat y,\lambda')
  }.
\end{align}
The loophole is that the accepted hidden-variable distribution may depend on the measurement settings,
\begin{equation}
  \rho(\lambda|\hat x,\hat y,D)
  \neq
  \rho(\lambda|D),
\end{equation}
so that different correlators in a Bell expression may be evaluated on different hidden-variable subensembles.

A specific local hidden-variable construction exhibiting this loophole is obtained by taking $\lambda=(\hat r_{\mathcal A},\hat r_{\mathcal B})$, following Kasday's approach~\cite{Kasday1971} as adapted to collider settings in Ref.~\cite{Abel1992}. In this model, postselection simply retains the runs in which the hidden-variable directions happen to fall inside the caps selected by the measurement settings. Thus the accepted subensemble contains a correlation between $\lambda$ and the settings, with the cap radius $\varphi$ fixing the tolerance. More specifically, for Alice, and analogously for Bob, $p(D_{\mathcal A}|\hat x,\hat r_{\mathcal A})=1$ if $\hat r_{\mathcal A} \in C_\varphi(+\hat x) \cup C_\varphi(-\hat x)$ and $0$ otherwise.

The spin-analyzer model, however, predicts that the acceptance probabilities are independent of both the settings and the preparation; see Eqs.~\eqref{eq:acceptance-effect} and~\eqref{eq:joint-acceptance-effect}. This motivates treating the local and joint-acceptance events as operationally equivalent across settings, a claim that must ultimately be tested experimentally. In the local construction above, by contrast, this operationally invisible distinction reappears as a different filtering of $\lambda$.

Here is the relevant form of measurement contextuality. For Alice, let $P_{\mathcal A}$ denote an arbitrary preparation, including, for example, a conditional preparation defined by a measurement outcome on Bob's wing. If the local acceptance events are operationally equivalent, then
\begin{equation}
  p(D_{\mathcal A}|\hat x, P_{\mathcal A})
  =
  p(D_{\mathcal A}|\hat x', P_{\mathcal A})
  \qquad \forall P_{\mathcal A}.
\end{equation}
Such operational equivalences are expected, for example, in decays of spin-$\tfrac{1}{2}$ particle pairs produced in an overall scalar or pseudoscalar state. Acceptance noncontextuality then requires this operational equivalence to be reflected at the hidden-variable level:
\begin{equation}
  p(D_{\mathcal A}|\hat x,\lambda)
  =
  p(D_{\mathcal A}|\hat x',\lambda)
  \qquad \forall \lambda.
\end{equation}
The analogous condition holds for Bob. If, moreover, the acceptance event is identified with the trivial procedure that accepts with probability $q_{\mathcal A}$ independently of the system, and similarly for Bob, then noncontextuality gives
\begin{equation}
  p(D_{\mathcal A}|\hat x,\lambda)=q_{\mathcal A},
  \qquad
  p(D_{\mathcal B}|\hat y,\lambda)=q_{\mathcal B},
  \qquad \forall \lambda.
  \label{eq:strong-fair-sampling}
\end{equation}
This is the strong condition for fair sampling, under which
\begin{equation}
  \rho(\lambda|\hat x,\hat y,D)=\rho(\lambda),
\end{equation}
and the postselected subensemble can be treated as a proper Bell-test sample.

The strong fair-sampling condition is sufficient for a valid Bell test on the postselected data, but it is stronger than needed. The accepted hidden-variable subensemble may differ from the original ensemble, provided it is independent of the chosen setting pair. This leads to the following weaker condition.

\begin{theorem}
  \label{thm:joint-acceptance-noncontextuality}
  Assume measurement independence before postselection, and let the joint-acceptance events be operationally equivalent for all setting pairs. If the underlying local hidden-variable theory is joint-acceptance noncontextual, then
  \begin{equation}
    \rho(\lambda|\hat x,\hat y,D)
    =
    \rho(\lambda|D).
    \label{eq:fair-sampling-postselection}
  \end{equation}
\end{theorem}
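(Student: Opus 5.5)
The plan is to make joint-acceptance noncontextuality explicit as a hidden-variable condition, then substitute it into the postselection formula for $\rho(\lambda|\hat x,\hat y,D)$ and check that the setting dependence cancels between numerator and denominator.

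\emph{Step 1: state the noncontextuality condition.} In a Bell-local model the joint-acceptance response is
\begin{equation*}
  p(D|\hat x,\hat y,\lambda)=p(D_{\mathcal A}|\hat x,\lambda)\,p(D_{\mathcal B}|\hat y,\lambda).
\end{equation*}
The hypothesis says that, for every pair of setting pairs $(\hat x,\hat y)$ and $(\hat x',\hat y')$, the events $D$ are operationally equivalent. That is, $p(D|\hat x,\hat y,P)=p(D|\hat x',\hat y',P)$ for all preparations $P$ of the pair. Joint-acceptance noncontextuality lifts this equivalence to the ontic level:
\begin{equation*}
  p(D|\hat x,\hat y,\lambda)=p(D|\hat x',\hat y',\lambda)\qquad\forall\lambda .
\end{equation*}
Fixing a reference pair $(\hat x_0,\hat y_0)$, I would define $g(\lambda)\coloneqq p(D|\hat x_0,\hat y_0,\lambda)$. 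Then $p(D|\hat x,\hat y,\lambda)=g(\lambda)$ for every setting pair and every $\lambda$, or at least $\rho$-almost everywhere, which is all that is needed.

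\emph{Step 2: compute the postselected distribution.} Measurement independence before postselection gives $\rho(\lambda|\hat x,\hat y)=\rho(\lambda)$. Bayes' rule then reproduces the displayed expression,
\begin{equation*}
  \rho(\lambda|\hat x,\hat y,D)=\frac{\rho(\lambda)\,g(\lambda)}{\int\dd\lambda'\,\rho(\lambda')\,g(\lambda')}.
\end{equation*}
Neither the numerator nor the denominator depends on $(\hat x,\hat y)$. The denominator equals $p(D|\hat x,\hat y)$, and it is nonzero whenever conditioning on $D$ is meaningful.

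\emph{Step 3: identify the result with $\rho(\lambda|D)$.} I would define $\rho(\lambda|D)$ by marginalizing over the setting distribution $p(\hat x,\hat y)$:
\begin{equation*}
  \rho(\lambda|D)=\frac{\int p(\hat x,\hat y)\,\rho(\lambda)\,g(\lambda)}{\int p(\hat x,\hat y)\int\dd\lambda'\,\rho(\lambda')\,g(\lambda')}.
\end{equation*}
Since the integrand does not depend on the settings, this reduces to the same expression as in Step 2. Equation~\eqref{eq:fair-sampling-postselection} then follows.

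The main obstacle is Step 1, and it is conceptual rather than computational. The difficulty is justifying that the operational equivalence, and hence noncontextuality, applies to the joint event $D$ for the full pair preparation. This requires quantifying over all preparations $P$ of the pair, including ones conditioned on other outcomes, so the equivalence class is well defined.

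A related subtlety is that local acceptance noncontextuality alone would give the stronger factorized form: it makes $p(D_{\mathcal A}|\hat x,\lambda)$ and $p(D_{\mathcal B}|\hat y,\lambda)$ separately setting-independent. The theorem only needs the weaker, product-level condition. I would therefore note that setting-independence of the product suffices and impose nothing on the individual factors.
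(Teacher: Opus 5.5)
Your proposal is correct and follows essentially the same argument as the paper. Joint-acceptance noncontextuality turns the operational equivalence into a common, setting-independent response function $p(D|\lambda)$, and Bayes' rule with measurement independence then removes the setting dependence from both numerator and denominator; your Step 3, which defines $\rho(\lambda|D)$ by marginalizing over $p(\hat x,\hat y)$, only makes explicit an identification the paper leaves implicit.
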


\begin{proof}
  By joint-acceptance noncontextuality, operational equivalence of the joint-acceptance events implies
  \begin{equation}
    p(D|\hat x,\hat y,\lambda)=p(D|\lambda)
  \end{equation}
  for all $\hat x$ and $\hat y$, where $p(D|\lambda)$ denotes their common response function. Using Bayes' theorem and measurement independence before postselection, we then obtain
  \begin{align}
    \rho(\lambda|\hat x,\hat y,D)
    &=
    \frac{
      \rho(\lambda|\hat x,\hat y)\,
      p(D|\hat x,\hat y,\lambda)
    }{
      \int \dd\lambda'\,
      \rho(\lambda'|\hat x,\hat y)\,
      p(D|\hat x,\hat y,\lambda')
    } \\
    &=
    \frac{
      \rho(\lambda)\,p(D|\lambda)
    }{
      \int \dd\lambda'\,\rho(\lambda')\,p(D|\lambda')
    }
    =
    \rho(\lambda|D).
  \end{align}
  Hence Eq.~\eqref{eq:fair-sampling-postselection} follows.
\end{proof}

\textit{Bell inequality.} Equation~\eqref{eq:fair-sampling-postselection} ensures that postselection does not introduce correlations between $\lambda$ and the settings. The postselected data therefore define an effective Bell scenario with continuous inputs $\hat x,\hat y\in S^2$ and binary outputs $a,b\in\{\pm1\}$. We now introduce a full-correlation Bell inequality adapted to the rotational geometry of the spin-analyzer model.

\begin{theorem}
  \label{thm:continuous-input-bell-inequality}
  For independently and uniformly sampled settings $\hat x,\hat y\in S^2$ and binary outputs $a,b\in\{\pm1\}$, let $\E$ denote the expectation over both the settings and outcomes, and define
  \begin{align}
    \mathcal B
    &\coloneqq
    -\E\!\left[
      \hat x\cdot\hat y\,ab
    \right] \nonumber \\
    &=
    -
    \int \frac{\dd\Omega_x}{4\pi}
    \int \frac{\dd\Omega_y}{4\pi}\,
    \hat x\cdot\hat y
    \sum_{a,b} a\,b\,p(a,b|\hat x,\hat y).
    \label{eq:bell-functional}
  \end{align}
  Then, for any measurement-independent local hidden-variable model,
  \begin{equation}
    \mathcal B\leq \frac14.
    \label{eq:classical-bell-bound}
  \end{equation}
\end{theorem}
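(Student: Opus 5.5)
Using the decomposition of Eq.~\eqref{eq:bell-lhv-decomp}, which holds for any measurement-independent local hidden-variable model, I will write the conditional correlator as
\begin{equation*}
  \sum_{a,b}ab\,p(a,b|\hat x,\hat y)=\int\dd\lambda\,\rho(\lambda)\,A(\hat x,\lambda)\,B(\hat y,\lambda),
\end{equation*}
where $A(\hat x,\lambda)\coloneqq\sum_a a\,p(a|\hat x,\lambda)\in[-1,1]$ and similarly for $B$. Since $\rho(\lambda)$ does not depend on the settings, the integration over $\lambda$ can be moved outside the setting averages. This gives
\begin{equation*}
  \mathcal B=-\int\dd\lambda\,\rho(\lambda)\,\bm u_\lambda\cdot\bm v_\lambda,
  \qquad
  \bm u_\lambda\coloneqq\int\frac{\dd\Omega_x}{4\pi}\,\hat x\,A(\hat x,\lambda),
\end{equation*}
with $\bm v_\lambda$ defined analogously from $B$. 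The problem then reduces to a pointwise bound in $\lambda$: it is enough to show $|\bm u_\lambda|\le 1/2$ and $|\bm v_\lambda|\le 1/2$. The Cauchy--Schwarz inequality then gives $-\bm u_\lambda\cdot\bm v_\lambda\le 1/4$, and integrating against $\rho$ yields Eq.~\eqref{eq:classical-bell-bound}.

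To bound $|\bm u_\lambda|$, I will choose the unit vector $\hat n$ parallel to $\bm u_\lambda$. Then
\begin{equation*}
  |\bm u_\lambda|=\int\frac{\dd\Omega_x}{4\pi}\,(\hat n\cdot\hat x)\,A(\hat x,\lambda).
\end{equation*}
Because $|A|\le1$, this quantity is maximized by $A(\hat x,\lambda)=\operatorname{sgn}(\hat n\cdot\hat x)$, so
\begin{equation*}
  |\bm u_\lambda|\le\int\frac{\dd\Omega}{4\pi}\,|\cos\theta|=\frac{1}{4\pi}\cdot 2\pi\cdot 2\int_0^{1}t\,\dd t=\frac12 .
\end{equation*}
The same argument applies to Bob.

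The main subtlety is whether the bound is attained. I don't need that for the statement, but the deterministic hemisphere model $A=\operatorname{sgn}(\hat n\cdot\hat x)$, $B=-\operatorname{sgn}(\hat n\cdot\hat y)$ gives exactly $1/4$, so the bound is tight. The step I expect to need care is the exchange of integrals together with measurability. I would state that $A$ and $B$ are measurable in $(\hat x,\lambda)$, which follows from the response functions being conditional probabilities. Fubini's theorem then applies because all integrands are bounded.

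I would also note how this applies to the postselected scenario. There one uses Theorem~\ref{thm:joint-acceptance-noncontextuality} to replace $\rho(\lambda)$ by $\rho(\lambda|D)$, which is independent of the settings. The same argument then goes through verbatim with the accepted response functions $p(a|\hat x,\lambda,D_{\mathcal A})$.
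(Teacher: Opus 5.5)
Your proposal is correct and follows essentially the same route as the paper: you factor the correlator through the setting-independent $\rho(\lambda)$ and reduce $\mathcal B$ to $-\int\dd\lambda\,\rho(\lambda)\,\bm u_\lambda\cdot\bm v_\lambda$. You then bound $|\bm u_\lambda|,|\bm v_\lambda|\le 1/2$ using $\int\frac{\dd\Omega}{4\pi}|\hat n\cdot\hat x|=\frac12$ and finish with Cauchy--Schwarz, while your remarks on tightness (the hemisphere model attaining $1/4$) and on measurability/Fubini are correct additions that the paper does not include.
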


\begin{proof}
  Let
  \begin{equation}
    \E\left[a_{\hat x}\right]_\lambda
    \coloneqq
    \sum_{a=\pm1} a\,p(a|\hat x,\lambda),
  \end{equation}
  and likewise for $\E\left[b_{\hat y}\right]_\lambda$. From the local hidden-variable decomposition, the correlator can be expressed as
  \begin{equation}
    \E\left[a_{\hat x}\,b_{\hat y}\right]
    =
    \int \dd\lambda\,\rho(\lambda)\,
    \E\left[a_{\hat x}\right]_\lambda
    \E\left[b_{\hat y}\right]_\lambda.
  \end{equation}
  Defining
  \begin{equation}
    \vec\alpha_\lambda
    \coloneqq
    \int \frac{\dd\Omega_x}{4\pi}
    \E\left[a_{\hat x}\right]_\lambda\hat x,
    \qquad
    \vec\beta_\lambda
    \coloneqq
    \int \frac{\dd\Omega_y}{4\pi}
    \E\left[b_{\hat y}\right]_\lambda\hat y,
  \end{equation}
  one can write
  \begin{equation}
    \mathcal B
    =
    -\int \dd\lambda\,\rho(\lambda)\,\vec\alpha_\lambda\cdot\vec\beta_\lambda.
  \end{equation}
  Since $\E\left[a_{\hat x}\right]_\lambda,\E\left[b_{\hat y}\right]_\lambda\in[-1,1]$, for any unit vector $\hat u$,
  \begin{align}
    \hat u\cdot\vec\alpha_\lambda
    &=
    \int \frac{\dd\Omega_x}{4\pi}
    \E\left[a_{\hat x}\right]_\lambda\,\hat u\cdot\hat x \leq
    \int \frac{\dd\Omega_x}{4\pi}
    |\hat u\cdot\hat x|
    =
    \frac12,
  \end{align}
  thus $|\vec\alpha_\lambda|\leq1/2$, and analogously $|\vec\beta_\lambda|\leq1/2$. Hence
  \begin{equation}
    -\vec\alpha_\lambda\cdot\vec\beta_\lambda
    \leq
    |\vec\alpha_\lambda|\,|\vec\beta_\lambda|
    \leq
    \frac14.
  \end{equation}
  Averaging over $\lambda$ preserves the bound, so $\mathcal B\leq1/4$.
\end{proof}

Quantum mechanics predicts that this bound can be violated. To see this, we consider the renormalized detection effects obtained by conditioning on the postselected events. Using Eqs.~\eqref{eq:Epm3D} and~\eqref{eq:acceptance-effect}, one obtains
\begin{equation}
  \widetilde E_{\pm|\hat x}^{(\varphi)}
  =
  \frac{E_{\pm|\hat x}^{(\varphi)}}{q_{\mathcal A}}
  =
  \frac12
  \left(
    \Id
    \pm
    \mu_{\mathcal A}\,\hat x\cdot\bm{\sigma}
  \right),
\end{equation}
with $\mu_{\mathcal A}=\eta_{\mathcal A}(1+\cos\varphi)/2$,
and analogously for Bob. The corresponding binary observables are therefore $\widetilde a_{\hat x}=\mu_{\mathcal A}\,\hat x\cdot\bm{\sigma}$ and $\widetilde b_{\hat y}=\mu_{\mathcal B}\,\hat y\cdot\bm{\sigma}$.
For a singlet source state, this gives
\begin{equation}
  \E\left[\widetilde a_{\hat x}\widetilde b_{\hat y}\right]
  =
  -\mu_{\mathcal A}\mu_{\mathcal B}\,\hat x\cdot\hat y.
\end{equation}
Substituting this into the Bell functional yields
\begin{align}
  \mathcal B_{\mathrm{QM}}
  &=
  \mu_{\mathcal A}\mu_{\mathcal B}
  \int \frac{\dd\Omega_x}{4\pi}
  \int \frac{\dd\Omega_y}{4\pi}
  (\hat x\cdot\hat y)^2 =
  \frac{\mu_{\mathcal A}\mu_{\mathcal B}}{3}.
  \label{eq:quantum-bell-value}
\end{align}
Thus the classical bound is violated whenever
\begin{equation}
  \mu_{\mathcal A}\mu_{\mathcal B}>\frac34.
  \label{eq:quantum-violation-threshold}
\end{equation}
In the symmetric case, $\mu_{\mathcal A}=\mu_{\mathcal B}=\mu$, this becomes $\mu>\sqrt3/2$.
In the ideal sharp limit, $\eta_{\mathcal A}=\eta_{\mathcal B}=1$ and $\varphi\to0$, one has $\mu_{\mathcal A}=\mu_{\mathcal B}=1$, so that $\mathcal B_{\mathrm{QM}}=1/3>1/4$. Accordingly, our main result is that, under measurement independence and operational equivalence of the joint-acceptance events, a violation of this bound witnesses Bell nonlocality or joint-acceptance contextuality.

\textit{Conclusion.} We have formulated operational Bell tests for collider decay processes in which independently selected analysis axes provide the inputs and coarse-grained detector regions define the outputs, as illustrated in Fig.~\ref{fig:caps3d}. We first showed in Theorem~\ref{thm:no-go-postprocessing} that, when the settings determine only complete local post-processings of a setting-independent detector record, the resulting correlations necessarily admit a local hidden-variable model. Thus, if the settings are introduced only through the analysis of a fixed detector record, a Bell violation requires postselection, implemented here by retaining only events in narrow antipodal regions.

Postselection may prevent such a violation from certifying Bell nonlocality. Theorem~\ref{thm:joint-acceptance-noncontextuality} shows, however, that operational equivalence of the acceptance events prevents this loophole from being exploited by any noncontextual local model: the accepted hidden-variable distribution cannot acquire a dependence on the chosen settings. We then derived the continuous-input Bell inequality of Theorem~\ref{thm:continuous-input-bell-inequality}, $\mathcal B\leq1/4$, whose quantum violation occurs whenever the condition in Eq.~\eqref{eq:quantum-violation-threshold} is satisfied. Subject to measurement independence and the relevant operational equivalences, a violation therefore witnesses Bell nonlocality or joint-acceptance contextuality. Although the explicit construction concerns spin-$\tfrac{1}{2}$ particles, the same operational strategy may admit extensions to higher-spin systems with suitably adapted coarse grainings and Bell inequalities. Since both the operational equivalences and the Bell functional can be tested using detector-level statistics, this construction provides a route from collider state-reconstruction studies to operational tests of classicality in high-energy processes.

\begin{acknowledgments}
This work was made possible through the support of Grant 63206 from the John Templeton Foundation. The work of AJB is also funded through STFC grants ST/R002444/1 and ST/S000933/1, the Binks Trust, and the John Fell Oxford University Press Research Fund. ACO acknowledges that this study was financed, in part, by the São Paulo Research Foundation (FAPESP), Brazil, process numbers 2025/10927-1 and 2026/08210-4, and by the QuantERA II Programme, which received funding from the European Union's Horizon 2020 research and innovation programme under Grant Agreement No.~101017733, project ``PhoMemtor,'' No.~2021/03/Y/ST2/00177. The authors are grateful to the members of the Oxford Quantum Collider research group for helpful comments and suggestions, to Jonathan Barrett for helpful discussions, and to the organisers of the ``Quantum Observables for Collider Physics 2026'' workshop at CERN where these proposals were first presented. The opinions expressed in this publication are those of the authors and do not necessarily reflect the views of the funding bodies. This work is distributed under the terms of the Creative Commons Attribution 4.0 International Licence (CC BY 4.0), which permits unrestricted use, distribution, and reproduction in any medium, provided the original author and source are credited.
\end{acknowledgments}

\bibliography{references}

\section*{End Matter}

\appendix*

We outline an experimental procedure for implementing the measurements considered in the main text, testing the required operational equivalences, and evaluating the Bell inequality. The equivalence tests and all optimization steps must be performed on datasets disjoint from the sample used for the Bell test itself.

\section{\label{app:physical-setup}Physical setup}

A suitable process should produce an entangled pair of unstable particles whose decay products have high spin-analyzing power. The directions $\hat r_{\mathcal A}$ and $\hat r_{\mathcal B}$ should also be reliably inferable from detector-level momenta. In general, these directions are defined in the parent rest frames, whose reconstruction may depend on assumptions about the decay and background effects. Such dependence can be reduced by selecting events in which the parent boosts are sufficiently small and well constrained, so that the measured laboratory-frame directions closely track the relevant rest-frame directions. The resulting reconstruction uncertainty must be included in the experimental response model. The impact of detector efficiency and acceptance effects can be characterized using independent control samples from other physics processes. Corrections inferred from these samples may then be applied, provided that the selected control events give an unbiased estimate of the relevant detector response. Top quark pairs are natural candidates because they are produced in large numbers and the empirically successful quantum model of their decays predicts spin-analyzing power close to unity. This prediction makes a violation of our Bell inequality likely, but no assumed value of spin-analyzing power enters the Bell-test argument itself.

In dileptonic top--antitop decays, the two wings can be identified by the opposite charges of the detected leptons. Each wing is supplied with an independent, time-stamped stream of unit vectors, uniformly distributed over the sphere. A stream may be obtained by converting independently sampled random bits into spherical coordinates through a predetermined map. Its generation and processing must be independent of the source preparation, event-selection criteria, and stream used on the opposite wing. Cosmic sources provide one possible implementation, for which any common cause correlating the settings with the preparation is pushed into the distant past.

\section{\label{app:operational-equivalence}Operational equivalence}

Assuming measurement independence and noncontextuality, Theorem~\ref{thm:joint-acceptance-noncontextuality} requires only operational equivalence of joint-acceptance events across setting pairs to obtain Eq.~\eqref{eq:fair-sampling-postselection}. Here, however, we propose testing the stronger operational equivalences of the local acceptance events, which imply the strong fair-sampling condition in Eq.~\eqref{eq:strong-fair-sampling}.

Ideally, this requires establishing
\begin{align}
  p(D_{\mathcal A}|\hat x,P_{\mathcal A})
  &=q_{\mathcal A}
  &&\forall\,\hat x,P_{\mathcal A},
  \label{eq:local-acceptance-operational-equivalence-alice} \\
  p(D_{\mathcal B}|\hat y,P_{\mathcal B})
  &=q_{\mathcal B}
  &&\forall\,\hat y,P_{\mathcal B}.
  \label{eq:local-acceptance-operational-equivalence-bob}
\end{align}
A finite, model-independent experiment cannot assume in advance that a given preparation set is tomographically complete, exhaustively cover a continuous range of analysis axes, or establish exact equality of probabilities. We therefore consider a redundant family of preparations, collect statistics for a large set of measurement settings, infer the rank of the resulting response tables, and validate the inferred preparation space using predictive tests.

We begin by selecting three mutually orthogonal choices of $\hat x$ on Alice's wing and conditioning separately on the outcomes $\{+,-\}$, thereby defining six preparations on Bob's wing. Bob characterizes these preparations using the full probabilities of the outcomes $b\in\{+,-,\varnothing\}$ for many different axes $\hat y$. Since, without assuming quantum theory, the orthogonal triad cannot be assumed to provide tomographic completeness, we augment this set with preparations conditioned on Alice's outcome $\varnothing$ and on the outcomes of additional, randomly chosen axes.

The collected data define the response table $R^{\mathcal B}_{i,(j,b)} = p(b|\hat y_j, P^{\mathcal B}_i)$. The affine dimension of the normalized preparation space resolved by these measurements is $d_{\mathcal B} = \text{rank}\,R^{\mathcal B}_{i,(j,b)} - 1$. In practice, the rank must be estimated while accounting for statistical fluctuations, which generally make the observed frequency table full rank. Part of the data is used to determine the dimension and identify a set of preparations spanning the observed preparation space. Additional preparations and analysis axes are then used to test whether their statistics can be predicted within this space. Successful predictions support the assumption that the experimentally accessible preparation space has been adequately characterized.

The procedure is then repeated with Alice and Bob interchanged. Once both preparation spaces have been characterized, the acceptance probabilities are tested across the preparations and analysis axes to determine whether they take the common values $q_{\mathcal A}$ and $q_{\mathcal B}$ required by Eqs.~\eqref{eq:local-acceptance-operational-equivalence-alice} and~\eqref{eq:local-acceptance-operational-equivalence-bob}. Because the experiment involves finite statistics and only finitely many preparations and axes, these tests provide operational evidence for the equivalences rather than establishing them exactly.

\section{\label{app:bell-test}Bell test}

Upon obtaining robust evidence for the operational equivalence of the local acceptance events, the event-selection criteria, analysis parameters, and cap width used in the equivalence tests are fixed and carried over unchanged to the Bell test. Using only calibration or simulation data, the cap width should be chosen to maximize the expected statistical significance, balancing the stronger correlations obtained with narrower caps against their lower acceptance rate, as explained below. If the operational-equivalence tests fail, the procedure must be revised and revalidated before examining the Bell sample.

Each detection event is matched by its timestamp to the corresponding analysis axes $\hat x$ and $\hat y$ for Alice and Bob, respectively. Events for which $a=\varnothing$ or $b=\varnothing$ are discarded from the Bell-test sample, leaving only those for which the joint-acceptance event $D=D_{\mathcal A}\cap D_{\mathcal B}$ occurs. We call these jointly accepted trials. Denoting their number by $N_D$, we estimate the continuous Bell functional $\mathcal B$ defined in Eq.~\eqref{eq:bell-functional} using the finite-sample estimator
\begin{equation}
  \widehat{\mathcal B}
  =
  -\frac{1}{N_D} \sum_{i=1}^{N_D} (\hat{\boldsymbol x}_i\cdot\hat{\boldsymbol y}_i)a_i b_i.
  \label{eq:bell-functional-estimator}
\end{equation}

Lastly, we outline how to choose the cap radius $\phi$ that maximizes the expected statistical significance of a Bell-inequality violation. The predicted separation between the quantum value in Eq.~\eqref{eq:quantum-bell-value} and the classical bound in Eq.~\eqref{eq:classical-bell-bound} is
\begin{equation}
  \Delta(\phi)
  :=
  \frac{\mu_{\mathcal A}(\phi)\mu_{\mathcal B}(\phi)}{3}
  -\frac{1}{4}.
\end{equation}
For independent trials, $a^2b^2=1$ and $\mathbb{E}[(\hat{\boldsymbol x}\cdot\hat{\boldsymbol y})^2]=1/3$, so the standard error of the finite-sample estimator is
\begin{equation}
  \sigma_{\widehat{\mathcal B}}(\phi)
  =
  \frac{1}{\sqrt{N_D}}
  \sqrt{
    \frac{1}{3}
    -
    \frac{
      [\mu_{\mathcal A}(\phi)\mu_{\mathcal B}(\phi)]^2
    }{9}
  }.
\end{equation}
Writing $N_0$ for the number of detection events before cap postselection, the expected number of jointly accepted trials is $N_D\simeq N_0(1-\cos\phi)^2$. The expected statistical significance can be expressed as
\begin{equation}
  S(\phi)
  =
  \sqrt{N_0}(1-\cos\phi)
  \frac{
    \dfrac{\mu_{\mathcal A}(\phi)\mu_{\mathcal B}(\phi)}{3}
    -\dfrac{1}{4}
  }{
    \sqrt{
      \dfrac{1}{3}
      -
      \dfrac{
        [\mu_{\mathcal A}(\phi)\mu_{\mathcal B}(\phi)]^2
      }{9}
    }
  },
  \label{eq:expected-statistical-significance}
\end{equation}
where $\mu_{\mathcal A}(\phi)=\eta_{\mathcal A}(1+\cos\phi)/2$, and analogously for Bob. Given the spin-analyzing powers, the optimal cap radius is obtained numerically from
\begin{equation}
  \phi_{\mathrm{opt}}
  =
  \operatorname*{arg\,max}_{\phi} S(\phi),
\end{equation}
where the maximization is restricted to experimentally admissible values of $\phi$.

\end{document}